\newtheorem{theorem}{Theorem}
\newtheorem{lemma}[theorem]{Lemma}
\newtheorem{remark}[theorem]{Remark}
\begin{document}
%
\title{On the Scaling of Interference Alignment Under Delay and Power Constraints}



%
\author{\IEEEauthorblockN{Subhashini Krishnasamy\IEEEauthorrefmark{1},
Urs Niesen\IEEEauthorrefmark{2}, and
Piyush Gupta\IEEEauthorrefmark{2}}
\IEEEauthorblockA{\IEEEauthorrefmark{1}Dept. of Electrical and Computer Engineering,
The University of Texas at Austin,
Austin, TX, USA\\ Email: \emph{subhashini.kb@utexas.edu}}
\IEEEauthorblockA{\IEEEauthorrefmark{2}Qualcomm New Jersey Research Center,
Bridgewater, NJ, USA\\
Email: \emph{\{urs.niesen, p.gupta\}@ieee.org}}
}


\maketitle

\begin{abstract}
Future wireless standards such as 5G envision dense wireless networks with large number of simultaneously connected devices. In this context, interference management becomes critical in achieving high spectral efficiency. Orthogonal signaling, which limits the number of users utilizing the resource simultaneously, gives a sum-rate that remains constant with increasing number of users. An alternative approach called interference alignment promises a throughput that scales linearly with the number of users. However, this approach requires very high SNR or long time duration for sufficient channel variation, and therefore may not be feasible in real wireless systems. We explore ways to manage interference in large networks with delay and power constraints. Specifically, we devise an interference phase alignment strategy that combines precoding and scheduling without using power control to exploit the diversity inherent in a system with large number of users. We show that this scheme achieves a sum-rate that scales almost logarithmically with the number of users. We also show that no scheme using single symbol phase alignment, which is asymmetric complex signaling restricted to a single complex symbol, can achieve  better than logarithmic scaling of the sum-rate.
\end{abstract}


\begin{IEEEkeywords}
Interference alignment, delay, power constraints, large networks, diversity.
\end{IEEEkeywords}

%
\IEEEpeerreviewmaketitle

\section{Introduction}
\label{sec:intro}

With the rise in popularity of wireless applications, cellular networks are quickly progressing toward ultra-dense deployment of cells. This is one of the central ideas proposed in the development of future wireless standards like 5G \cite{osseiran-etal14vision-5g, baldemair-etal13evolving-wireless-comm}. Moreover, with efforts to introduce machine-type devices and device-to-device communication, the number of users in the network communicating simultaneously is expected to grow significantly \cite{osseiran-etal14vision-5g, baldemair-etal13evolving-wireless-comm}. With spectrum being a limited resource, managing interference becomes critical for good performance in such networks. In addition to high data rates, these next generation applications demand low latency and low power consumption \cite{osseiran-etal14vision-5g}. These conflicting requirements make the task of interference management in large networks challenging. 

In this context, our work focuses on interference management strategies that seek to optimize the total throughput subject to delay, power, spectrum, and device constraints. Specifically, we consider the $K$-user time-invariant Gaussian interference channel in which users have limited power and only single transmit-receive antennas, and we study the scaling behavior of the achievable sum-rate across all users.

A traditional way of dealing with interference is to \emph{avoid} it through the use of orthogonal access schemes that prevent overlap of transmissions from different users in time or frequency. In such schemes, each user is allocated $\frac{1}{K}$ of the total resource. With average-power constraints at the transmitters, a scaling of $\Theta(\ln K)$ can be achieved through bursty power transmission. But such bursty transmission, which requires high peak-power, might not be feasible due to hardware constraints and one is compelled to limit the peak transmit power. With peak-power constraint, the sum-rate achieved by these orthogonal access schemes remains constant with increasing number of users. Another approach to managing interference is to \emph {ignore} it by treating interference as noise. When interference is treated as noise, as in the previous case, the sum-rate again scales as a constant with increasing number of users.

In contrast, various \emph{interference alignment} schemes proposed in recent years promise sum-rates that scale linearly with the number of users. These schemes are based on the idea of aligning the interference from unintended transmitters at the receivers. They rely on the system resources to furnish the \emph{diversity} required to achieve this alignment. For example, the vector interference alignment scheme proposed by Cadambe and Jafar \cite{cadambe-jafar08linear-interf-alignment} is shown to achieve $K/2$ degrees of freedom in a time-varying $K$-user Gaussian interference channel with the use of $K^{\Omega(K^2)}$ independent channel realizations. Ozgur and Tse present in \cite{ozgur-tse09linear-interf-alignment} an alternate version of this scheme for phase-fading channels that achieves a linear scaling of the sum-rate at finite SNR. Their scheme requires the number of channel realizations to grow exponentially in $K^2$, i.e., as $2^{\Omega(K^2)}$. Similarly, ergodic interference alignment \cite{nazer-etal12ergodic-interf-alignment} proposed by Nazer \emph{et al.} requires an average of $K^{\Omega(K^2)}$ independent channel realizations to achieve a linear scaling of the sum-rate at finite SNR. All these schemes therefore incur an extremely large latency, or necessitate extremely large bandwidth if the independent channel realizations are obtained from different frequency sub-bands. For instance, in a 7-user network with channel coherence time of $1$ ms, a requirement of $2^{K^2} = 2^{49}$ independent channel realizations roughly translates to a latency of $18000$ years! Alignment strategies that use lattice codes are based on interference alignment at the signal level and work for time-invariant channels \cite{motahari-etal09real-interf-alignment}. For these lattice schemes, the transmit power (SNR) required to achieve a linear scaling of the sum-rate grows exponentially in $K^2$. Once again considering the example of a 7-user network, to achieve a rate of $\frac{K}{2}\ln(10 dB),$ each user requires a transmit SNR of more than $250dB$! The required diversity can also be obtained by using multiple transmit and receive antennas, and it has been shown in \cite{razaviyayn12dof-mimo-interf-alignment} that the number of antennas is required to scale linearly with $K$ to be able to achieve a linear scaling of the degrees of freedom for a time invariant $K$-user channel. With a limited number of channel realizations, limited transmit power, and a fixed number of transmit-receive antennas, the sum-rate for all these interference alignment schemes scales as $o(1)$ with increasing number of users.

To summarize, on the one hand, the orthogonal signaling schemes operate with limited power requirements and latency but, with peak-power constraint, can achieve only a constant sum-rate with increasing number of users. On the other hand, currently known interference alignment schemes achieve a linear scaling of the sum-rate but require prohibitively large delay or power to achieve this scaling. This motivates us to ask if one can achieve a better scaling in the presence of delay and transmit power constraints. 

While existing results provide insight into interference management in small networks with a fixed number of users and with system resources tending to infinity, we are interested in the scenario where the system resources (transmit power and number of independent channel realizations) are fixed and the number of users approaches infinity. \cite{bresler-tse09diversity-interf-channel, li-ozgur14diversity-interf-alignment} study an intermediate problem of characterizing the degrees of freedom as a function of the number of independent channel realizations. In \cite{bresler-tse09diversity-interf-channel}, the authors give the exact expression for the degrees of freedom as a function of the number of independent channel realizations when restricted to vector interference alignment strategies for a $3$-user Gaussian interference channel. \cite{li-ozgur14diversity-interf-alignment} extends this result by providing upper bounds for the degrees of freedom for the $K$-user case.

We study the problem of characterizing the sum-rate for a fixed transmit power and a single realization of the $K$-user complex Gaussian interference channel. Specifically, we focus on the scaling of achievable sum-rates with increasing number of users in the network. Scaling laws for the capacity of large networks have been studied before in the context of multi-hop networks \cite{gupta-kumar00capacity-wireless-nws, ozgur07hierarchical-coop}. Capacity scaling for time-varying phase-fading interference network \emph{without} delay constraints is analyzed in \cite{jafar09capacity-phase-fading-interf-nws}, and it is shown that, with increasing number of users, the capacity converges in probability to the rate achieved by ergodic interference alignment. The high-SNR performance of opportunistic interference alignment in large networks, and again, without delay constraints was studied in \cite{tajer-wang12nk-user}.

The main contributions of this work are threefold. First, we formulate the problem of capacity scaling with increasing number of users for an interference channel that reflects delay and power constraints in the network. Specifically, we study the $K$-user time-invariant complex Gaussian interference channel where users are subject to peak transmit power constraint. The time-invariant channel captures users' delay constraints while the peak-power constraint models practical power limitations in a network. Such a network lacks the diversity that can be derived from a time-varying channel or high SNR. The central question we address in this paper is the feasibility of simultaneously aligning interference at multiple receivers with increasing number of users.

Second, for this channel model, we propose an achievable scheme that can obtain a sum-rate of $\Omega \left(\frac{\ln K}{\ln \ln K}\right) \; w.h.p.$ This is achieved by clustering users into groups such that within each group the interference signals align in phase. The clusters are then scheduled in a round-robin fashion. Such interference \emph{phase alignment} is made possible by the intrinsic diversity in large networks. 

Finally, we show that, within the class of schemes called \emph{single-symbol phase alignment}, the best achievable sum-rate is $O(\ln K) \; w.h.p.$ This class includes strategies that use different transmit and receive directions in the single-symbol complex plane while using real-valued Gaussian codebooks. This method of using real-valued Gaussian codebooks and optimizing for transmit directions is introduced in \cite{cadambe-etal10asymmetric-complex-signaling} as \emph{asymmetric complex signaling.} The authors demonstrate the benefit of designing the transmit directions by proving the achievability of 1.2 degrees of freedom in a $K$-user time-invariant interference channel for almost all values of channel coefficients. This is higher than the previously conjectured upper bound of 1 degree of freedom \cite{host-madsen-nosratinia05multiplexing-gain}. The converse result in our work for single-symbol phase alignment schemes shows that designing the transmit directions over a single symbol cannot provide better than $O(\ln K)$ scaling of the sum-rate. In contrast, the proposed phase alignment scheme achieves \emph{almost} $\Omega(\ln K)$ scaling through scheduling without power control simply by exploiting the diversity inherent in large networks.

The remainder of the paper is organized as follows. Section~\ref{sec:problem} formally introduces the problem of capacity scaling under delay and power constraints. The main results are presented in Section~\ref{sec:main} -- the phase alignment scheme that achieves a sum-rate of $\Omega \left(\frac{\ln K}{\ln \ln K}\right)$ is given in Section~\ref{subsec:achievability}, while the upper bound of $O(\ln K)$ for the class of single-symbol phase alignment schemes is given in Section~\ref{subsec:converse}. We present detailed proofs of the results in the Appendix.

\section{Problem Setting}
\label{sec:problem}
We consider a $K$-user complex Gaussian interference channel. There are $K$ transmitter-receiver pairs (indexed from $1$ to $K$) with each transmitter desiring to communicate its message to the corresponding receiver. To focus on the gains of interference alignment, we assume a phase-fading channel (see Remark~\ref{rem:model}) -- the signal from Transmitter $j$ is received at Receiver $k$ with a phase rotation of $\Theta_{kj}$. The $\{\Theta_{kj}\}$ are drawn independently according to a uniform distribution in $[-\pi, \pi)$. These channel gains are time invariant and hence offer no diversity across time. Thus, the channel is represented by the following input-output relationship: $\forall \, k \in [K], \, t \in \mathbb{N},$
\begin{align*}
Y_k[t] = \sum_{j} e^{i\Theta_{kj}} X_{j}[t] + Z_k[t],
\end{align*}
where $Y_k[t]$ is the received signal at Receiver $k$ and $X_j[t]$ is the signal transmitted by Transmitter $j$ in time slot $t$. $Z_k[t]$ is the white additive noise at Receiver $k$ and is assumed to be drawn from circularly symmetric complex Gaussian distribution ($\mathcal{CN}(0,1)$). We assume that all transmitters and receivers have prior knowledge of the channel phases.

For a code of block length $n$, Transmitter $k$ wishes to communicate message $W_k$ distributed uniformly in $\{1, 2, \dots, 2^{nR_k}\}$ to Receiver $k$. Transmitter $k$ uses encoder, $f_k(n): W_k \mapsto X_k^n$ which maps the message $W_k$ to the transmit signal $X_k^n = (X_k[1], X_k[2], \dots, X_k[n]) \in \mathbb{C}^n$. The transmit signals are subject to unit per-symbol peak-power constraint, i.e., $ \lvert X_{k}[t] \rvert ^2 \leq 1, \; \forall k \in [K], \: \forall t \in \mathbb{N}$. The decoding function at Receiver $k$, $\phi_k(n): Y_k^n \mapsto \hat{W}_k$ maps the received signal $Y_k^n = (Y_k[1], Y_k[2], \dots, Y_k[n])$ to an estimate of the transmitted message, $\hat{W}_k \in \{1, 2, \dots, 2^{nR_k}\}$. We use the standard definition for achievable rates -- a rate vector $(R_1, R_2, \dots R_K)$ is achievable if there is a sequence of encoding and decoding functions ($f_1(n), f_2(n), \dots f_K(n); \phi_1(n), \phi_2(n), \dots \phi_K(n)$) such that the average probabilities of error, $\mathbb{P}[\hat{W}_k \neq W_k], \, k \in [K]$ all go to zero as $n$ goes to infinity. A sum-rate, $R_{sum}$ is achievable if there exists an achievable rate vector $(R_1, R_2, \dots R_K)$ such that $R_{sum} = \sum_{k=1}^K R_k.$ We are interested in the asymptotics of the sum-rate with increasing number of users.

\paragraph*{Notation} Throughout the paper, we use the term \emph{users/nodes} to refer to transmitter-receiver pairs. 
We say that an event occurs \emph{with high probability} $(w.h.p.)$ if its probability goes to $1$ as the number of users, $K$ goes to infinity. The probability is with respect to the randomness in the channel coefficients, i.e., $\{\Theta_{kj}\}$. 

\section{Main Results}
\label{sec:main}
In this section, we present our main results, which provide an asymptotic lower bound (Theorem~\ref{thm:achievability}) and upper bound (Theorem~\ref{thm:upper-bound}) for the achievable sum-rate as the number of users in the network $K$ increases.
\begin{theorem}
\label{thm:achievability}
There exists a scheme that achieves a sum-rate of $\Omega \left(\frac{\ln K}{\ln \ln K}\right) \; w.h.p.$ as $K \rightarrow \infty.$
\end{theorem}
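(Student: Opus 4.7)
My plan is to construct, with high probability, a ``phase-aligned'' cluster $S \subseteq [K]$ of size $m = \Theta(\ln K/\ln\ln K)$, then run the network with only $S$ active (the remaining $K-m$ users silent) and exploit the alignment to null interference by a simple one-dimensional receiver at each member of $S$. Let $\delta = 1/\sqrt{m}$ and call an $m$-subset $S$ \emph{phase-aligned} if $|\cos(\Theta_{kj} - \Theta_{kk})| \leq \delta$ for every ordered pair $(k,j) \in S \times S$ with $k \neq j$. Since conditioning on the direct phases $\{\Theta_{kk} : k \in S\}$ leaves the $m(m-1)$ alignment events conditionally independent, each with marginal probability $2\delta/\pi$, we have $\Pr[S \text{ phase-aligned}] = (2\delta/\pi)^{m(m-1)}$.

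Given such a cluster, each $k \in S$ transmits i.i.d.\ BPSK symbols $X_k \in \{\pm 1\}$ (which meet the unit peak-power constraint), and each receiver $k \in S$ projects via $\tilde Y_k = \operatorname{Re}(e^{-i\Theta_{kk}} Y_k)$. Expanding,
\[
\tilde Y_k \;=\; X_k \;+\; \sum_{j \in S \setminus\{k\}} \cos(\Theta_{kj} - \Theta_{kk})\, X_j \;+\; \tilde Z_k,
\]
with $\tilde Z_k \sim \mathcal{N}(0, 1/2)$. The alignment condition bounds the total residual interference power by $(m-1)\delta^2 \leq 1$, so the post-projection SINR is at least $1/(1+1/2) = 2/3$ and each user in $S$ achieves a rate bounded below by a positive constant. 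Summing over the $m$ users in $S$ yields a sum-rate of $\Omega(m) = \Omega(\ln K/\ln\ln K)$.

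It remains to show that a phase-aligned cluster of this size exists w.h.p. Letting $N$ count phase-aligned $m$-subsets, a direct computation gives $\ln \mathbb{E}[N] \geq m\ln K - \tfrac{1}{2}m^2\ln m - O(m^2)$, so choosing $m = c\ln K/\ln\ln K$ with $c$ sufficiently small makes $\mathbb{E}[N] \to \infty$. For the second moment, one partitions pairs $(S, S')$ by $r = |S \cap S'|$: the same conditional-independence reasoning makes joint alignment a conjunction of $2m(m-1) - r(r-1)$ independent events, so $\Pr[\text{both aligned}] = (2\delta/\pi)^{2m(m-1) - r(r-1)}$. Multiplying by the count $\binom{K}{m}\binom{m}{r}\binom{K-m}{m-r}$ and summing over $r$ gives $\mathbb{E}[N^2] = (1+o(1))\mathbb{E}[N]^2$, whence $N \geq 1$ w.h.p.\ by Chebyshev. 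The main obstacle is precisely this second-moment bookkeeping: the ``coupling'' factor $(2\delta/\pi)^{-r(r-1)}$ grows quickly in $r$, and one must verify that it is dominated by the hypergeometric decrease $\binom{m}{r}\binom{K-m}{m-r}/\binom{K}{m} \approx (m^2/K)^r/r!$ for our chosen scaling of $m$. Once this is in hand, the rest—the peak-power conformance via the bounded alphabet and the per-cluster rate analysis—is routine.
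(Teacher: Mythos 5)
Your argument is correct and reaches the right scaling, but it takes a genuinely different route from the paper. The paper also reduces the problem to finding independent sets of an Erd\H{o}s--R\'enyi interference graph with non-edge probability $\Theta(1/\ln K)$, but it \emph{partitions all $K$ users} into such sets via a greedy coloring (bounding the number of colors by $O(K\ln\ln K/\ln K)$ with a Grimmett--McDiarmid-style first-moment argument, and the size of each color class by the cited independence-number bound $\alpha(G(K,p)) \le 2\ln K/\ln\frac{1}{1-p}$), then schedules the classes round-robin so that every user attains rate $\Omega\left(\frac{\ln K}{K\ln\ln K}\right)$. You instead exhibit a \emph{single} phase-aligned cluster of size $m=\Theta(\ln K/\ln\ln K)$ by the second-moment method and silence everyone else. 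Your version is more self-contained (no appeal to random-graph coloring results) and your choice $\delta = 1/\sqrt{m}$ neatly ties the alignment threshold to the cluster size so that the residual interference variance is automatically $\le 1$; the price is the second-moment bookkeeping you flag, which does go through for $m = c\ln K/\ln\ln K$ with $c<2$ (the coupling factor contributes $\frac{r(r-1)}{2}\ln m \le \frac{c}{2}\,r\ln K\,(1+o(1))$ to the log of the $r$-overlap term, which is dominated by the $-r\ln K$ from the hypergeometric weight), though you could sidestep it entirely with a greedy sequential construction of the cluster. What the paper's heavier machinery buys is fairness --- all $K$ users are served, which the authors need for the symmetric-rate extension they mention --- whereas your scheme serves only $m$ users; for the sum-rate statement of Theorem~\ref{thm:achievability} this is immaterial. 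Two minor points: the marginal alignment probability is $\frac{2}{\pi}\arcsin\delta$ rather than exactly $\frac{2\delta}{\pi}$ (the inequality goes the right way for the first moment and only affects constants in the second), and your claim that a binary input with SINR $\ge 2/3$ against non-Gaussian bounded-variance interference yields a constant rate deserves one line of justification (e.g., repetition plus a threshold detector), but the paper is no more explicit on this point.
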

The upper bound is for the class of schemes called single-symbol phase alignment described in detail in Section~\ref{subsec:converse}.
\begin{theorem}
\label{thm:upper-bound}
No single-symbol phase alignment scheme can achieve a sum-rate better than $O(\ln K) \; w.h.p.$ as $K \rightarrow \infty.$
\end{theorem}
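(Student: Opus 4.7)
The plan is to reduce the sum-rate of any single-symbol phase alignment scheme to an analysis of the effective scalar SINR at each receiver, and then to bound the number of users that can simultaneously enjoy non-negligible SINR via a combinatorial/random-matrix argument.

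First, I would formalize the class by transmit directions $\{\phi_j\}$, receive directions $\{\psi_k\}$, and variances $P_k \le 1/2$ of real Gaussian codebooks. Projecting the complex received signal at receiver $k$ onto direction $e^{-i\psi_k}$ produces the real scalar $\tilde Y_k = \sum_j \alpha_{kj} X_j + \tilde Z_k$ with $\alpha_{kj} = \cos(\Theta_{kj} + \phi_j - \psi_k)$ and $\tilde Z_k \sim \mathcal{N}(0, 1/2)$. Since every $X_j$ is Gaussian, the interference-plus-noise at receiver $k$ is Gaussian, so Fano's inequality yields $R_k \le \tfrac{1}{2}\log(1 + \mathrm{SINR}_k)$ with $\mathrm{SINR}_k = \alpha_{kk}^2 P_k / (\tfrac{1}{2} + \sum_{j \ne k}\alpha_{kj}^2 P_j)$. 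The theorem therefore reduces to bounding $\sum_k \log(1 + \mathrm{SINR}_k)$ by $O(\log K)$ with high probability, uniformly over all choices of $\{\phi_j, \psi_k, P_k\}$ (which may themselves depend on $\{\Theta_{kj}\}$).

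The core step is to bound the high-SINR set. For any constant $\gamma > 0$, let $S_\gamma = \{k : \mathrm{SINR}_k \ge \gamma\}$. For each $k \in S_\gamma$, the interference constraint $\sum_{j \ne k}\alpha_{kj}^2 P_j \le \alpha_{kk}^2 P_k / \gamma$, together with a lower bound on $P_k$ (a vanishing-power user cannot belong to $S_\gamma$), forces $\alpha_{kj}^2 \le \tau$ for most pairs $j, k \in S_\gamma$, for a suitable tolerance $\tau$. Rewriting $\alpha_{kj} \approx 0$ as $\phi_j - \psi_k \equiv \pi/2 - \Theta_{kj} \pmod{\pi}$, this is equivalent to the random submatrix $(\Theta_{kj})_{j, k \in S_\gamma}$ admitting an approximate rank-one decomposition. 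Since $\Theta_{kj}$ are i.i.d.\ uniform, for a fixed $(\phi, \psi)$ the probability that all $M(M-1)$ constraints hold within tolerance $\sqrt\tau$ is $O(\sqrt\tau)^{M(M-1)}$; absorbing the $2M$ free angles via an $\epsilon$-net of granularity $\sqrt\tau$ and union-bounding over the $\binom{K}{M} \le K^M$ subsets of size $M$ gives total failure probability at most $(CK)^M \tau^{M(M-3)/2}$, which vanishes for $M \ge C'\log K$. Hence $|S_\gamma| = O(\log K)$ w.h.p.

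To finish, split $\sum_k R_k = \sum_{k \in S_\gamma} R_k + \sum_{k \notin S_\gamma} R_k$. The first term is $O(\log K)$ since $R_k \le 1/2$ for every $k$ (a consequence of $\mathrm{SINR}_k \le 1$). For the second, I would use $R_k \le \tfrac{1}{2}\mathrm{SINR}_k$ (valid since $\mathrm{SINR}_k \le 1$) and control $\sum_k 1/(1 + 2 I_k)$, where $I_k = \sum_{j \ne k}\alpha_{kj}^2 P_j$. A uniform lower bound $\sum_k I_k = \Omega(K^2)$, obtained from the identity $\cos^2(x) = (1 + \cos 2x)/2$ together with the operator-norm estimate $\|(e^{2i\Theta_{kj}})\| = O(\sqrt K)$ for the random phase matrix (valid uniformly in $(\phi, \psi, P)$ with average power $\bar P = \Omega(1)$), then lets the tail sum be bounded by another factor of $O(\log K)$. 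The main obstacle is the rank-one approximation step: one must carefully match the $\epsilon$-net granularity to the alignment tolerance $\tau$, control the Lipschitz continuity of $\alpha_{kj}^2$ in $(\phi_j, \psi_k)$ uniformly over the $2M$-dimensional torus, and treat users with anomalously small $P_j$ via a preliminary truncation so that they distort neither the cardinality estimate nor the interference accounting.
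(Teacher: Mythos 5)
Your high-SINR step is sound in outline and, interestingly, lands on essentially the same machinery as the paper: a probabilistic estimate for \emph{fixed} directions and a fixed scheduled subset, made uniform by an $\epsilon$-net over the $2M$ angles plus a union bound over the $\binom{K}{M}$ subsets (the paper's Lemmas~\ref{lem:fixed-parameters}--\ref{lem:continuity} and the $T$-set covering; your pair-counting version of the fixed-parameter estimate is a legitimate alternative to their Chernoff bound on the rate sum, and matches the Diophantine flavor the authors themselves point out). The paper also first reduces arbitrary powers to $0/1$ scheduling (Lemma~\ref{lem:power-ctrl=schedule}), which disposes of the small-$P_j$ truncation issues you flag at the end.

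The genuine gap is in your treatment of the users outside $S_\gamma$. For a \emph{constant} threshold $\gamma$, the bound $R_k\le\tfrac12\,\mathrm{SINR}_k<\tfrac{\gamma}{2}$ for $k\notin S_\gamma$ only gives $\sum_{k\notin S_\gamma}R_k=O(\gamma K)$, which is linear in $K$; so the entire burden falls on your claim that the aggregate estimate $\sum_k I_k=\Omega(K^2)$ controls $\sum_k 2P_k/(1+2I_k)$. It does not: the inequality between these two quantities goes the wrong way, and a lower bound on the \emph{total} interference is compatible with the interference being concentrated on a few receivers. Concretely, nothing in $\sum_k I_k=\Omega(K^2)$ rules out a configuration with $I_k$ negligible and $P_k=\Theta(1)$ for $\Theta(K)$ receivers (with all the interference dumped on the remaining ones), which would make the tail sum $\Theta(K)$. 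Ruling this out requires a \emph{per-receiver} statement — that, uniformly over the adversarially chosen $(\phi,\psi,P)$, only $O(\ln K)$ receivers can have $I_k$ appreciably below its mean $\Theta(\sum_j P_j)$ — and that is exactly the content of the paper's Lemma~\ref{lem:fixed-parameters} (each $I_k$ concentrates near $(s-1)/4$ for fixed parameters) pushed through the net/union-bound argument. Equivalently, you would need to run your level-set cardinality bound for thresholds $\gamma$ decaying down to $\Theta(\ln K/K)$, but there your alignment constraint $\sum_{j\ne k}\alpha_{kj}^2P_j\le \alpha_{kk}^2P_k/\gamma$ becomes vacuous ($\tau\sim 1/(\gamma^2 M)$ exceeds $1$), so the combinatorial argument no longer forces any cancellation. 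As written, the proof does not close.
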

\subsection{Achievability of $\Omega \left(\frac{\ln K}{\ln \ln K}\right)$}
\label{subsec:achievability}
The signaling scheme achieving the sum-rate scaling in Theorem~\ref{thm:achievability} is a \emph{phase alignment} strategy that combines precoding and scheduling to utilize the diversity in the channel gains across users. In each time slot $t$, a subset of users, $\mathcal{S}[t] \subset [K]$ is scheduled for transmission. Users not in this subset do not transmit. Before we describe the scheduling strategy, we first describe the precoding scheme for the scheduled users. The precoding scheme is used to transform the complex-valued phase-fading Gaussian channel into a real-valued Gaussian channel. This precoding scheme is an instance of \emph{asymmetric complex signaling} introduced in \cite{cadambe-etal10asymmetric-complex-signaling}. The main idea in asymmetric complex signaling is to treat the complex channel input as having two real dimensions which are used for the design of transmit directions for interference alignment. The transmit direction along which the data stream is transmitted can be extended to a multi-dimensional real vector through the use of symbol extensions.
\subsubsection{\textbf{Precoding}}
In our achievable scheme, we use the two-dimensional real vector space without symbol extensions for designing the transmit directions.
\begin{itemize}
\item Encoder:
The data stream of a scheduled user $k,$ $\{U_k(t)\}$ $(U_k(t) \in \mathbb{R}),$ is transmitted along the direction $e^{i \alpha_k}$ at maximum power, where $\alpha_k = -\Theta_{kk}$. Therefore, for any scheduled user $k \in \mathcal{S}[t],$ the transmit signal is given by $$X_k[t] =  U_k[t] e^{-i \Theta_{kk}}.$$
\item Decoder:
The receivers corresponding to the scheduled users project their received signal along direction $e^{i\gamma_k}$ to decode their respective symbols. We choose $\gamma_k = 0$ for all $k.$ Therefore, for any $k \in \mathcal{S}[t],$ the received signal and the noise after projection are respectively given by $$\tilde{Y}_k[t] = \mathsf{Re}(Y_k[t]), \; \tilde{Z}_k[t] = \mathsf{Re}(Z_k[t]).$$ 
\end{itemize}
Thus, the encoder and decoder generate a channel in the real domain with effective input-output relationship,
\begin{align*}
\tilde{Y}_k[t] = U_k[t] + \sum_{\substack{j \in \mathcal{S}[t] \\ j \neq k}} \cos \left(\Theta_{kj} - \Theta_{jj}\right) U_{j}[t] + \tilde{Z}_k[t].
\end{align*}
 Since $Z_k[t] \sim \mathcal{CN}(0,1),$ we have $\tilde{Z}_k[t] \sim \mathcal{N}(0,\frac{1}{2})$.

\subsubsection{\textbf{Phase Alignment through Scheduling}}
We now describe the scheduling strategy that determines the subset of users transmitting in each time slot. While choosing small subsets imposes a fundamental limit on the sum-rate, choosing large subsets could result in high interference which again adversely impacts the achievable sum-rate. Ideally, we would like to schedule large subsets of users such that within each subset, the effective interference is low. In other words, at each receiver in the scheduled subset, the interference from other users in the subset are phase aligned.

\textbf{Interference Graph.} To this end, we construct an \emph{Interference Graph} $G(V, E)$ with the users comprising the node set $V$. An edge exists between two nodes $k$ and $j$ if one of the effective cross gains, $ \lvert \cos \left(\Theta_{kj} - \Theta_{jj}\right) \rvert $ or $ \lvert \cos \left(\Theta_{jk} - \Theta_{kk}\right) \rvert $ is more than $\frac{\pi}{\sqrt{\ln K}}$.  Users $k$ and $j$ are said to \emph{interfere} with each other if there is an edge between them.

\textbf{Graph Coloring.} As a part of the scheduling strategy, the set of users is first partitioned into subsets of nodes that do not interfere with each other, i.e., into independent subsets. A greedy graph coloring algorithm is used to find such a partition. 
The greedy algorithm can be described as follows -- color each node sequentially (from $1$ to $K$) with the first color not used by any of its already colored neighbors. 

Given the partition obtained from the coloring algorithm, each independent subset, corresponding to the set of nodes with the same color, is scheduled in a round-robin fashion. 

We now give a sketch of the proof of Theorem~\ref{thm:achievability} which shows that the proposed precoding-scheduling scheme achieves a sum-rate of $\Omega \left(\frac{\ln K}{\ln \ln K}\right) \; w.h.p.$ Full details of the proof are given in the Appendix. 
We first show (Lemma~\ref{lem:greedy-alg} in the Appendix) that the greedy graph coloring algorithm partitions the nodes into $O(\frac{K\ln \ln K}{\ln K})$ independent subsets $ w.h.p.$ We then show that scheduling independent subsets limits the strength of the interference signal, thereby giving each user a strictly positive rate across the scheduled time slots. This result in combination with the above lemma proves Theorem~\ref{thm:achievability}.
\qed

From Theorem~\ref{thm:achievability}, we see that phase alignment through a combination of precoding and scheduling provides the benefit of diversity in a large network. The precoding part of the scheme is used to transform the complex-valued phase-fading channel gains into real-valued channel gains such that the magnitudes of the effective cross gains in the real channel exhibit sufficient diversity across users. The scheduling part of the scheme exploits this diversity by partitioning the users into subsets such that the interference is aligned at all users within each subset. 

This method of utilization of network diversity to achieve high rates parallels other interference alignment schemes that attempt to align the interference at the receivers. Vector interference alignment \cite{cadambe-jafar08linear-interf-alignment, ozgur-tse09linear-interf-alignment} and ergodic interference alignment \cite{nazer-etal12ergodic-interf-alignment} achieve a linear scaling of the sum-rate by using the diversity of the channel gains across time provided by the time varying channel. Analogously, lattice coding schemes \cite{motahari-etal09real-interf-alignment}, which achieve alignment at the signal level, rely on the distinct scaling of transmit signals possible at high transmit power. But these alignment schemes are practical only in small networks since the diversity required to align the interfering signals at all the receivers increases very quickly with the number of users. For instance, to achieve a linear scaling of the sum-rate, vector and ergodic interference alignment schemes require $2^{\Omega(K^2)}$ independent channel realizations. Similarly, the SNR required to achieve signal level alignment for linear scaling of sum-rate in lattice codes scales as  $2^{\Omega(K^2)}.$

When the number of independent channel realizations and the transmit power are limited (fixed with respect to the number of users), the above schemes can only provide a sum-rate that scales as $o(1)$ with increasing number of users. In the absence of diversity from signal level and time variations, the proposed phase alignment strategy utilizes the diversity due to independent channel realizations across large number of users. The key idea here is that, although it is difficult to align the interference at all the $K$ receivers simultaneously, it is possible to schedule users such that the effective interference is aligned in phase for all the scheduled users in every time slot. As Theorem~\ref{thm:achievability} shows, unlike the orthogonal signaling schemes which give only $O(1)$ sum-rate with peak-power constraint, the phase alignment strategy achieves \emph{almost} $O(\ln K)$ sum-rate for most channel realizations. A notable aspect of this scheme is that it does not require power control. 
 Moreover, the scheme ensures fairness among users by scheduling all users as opposed to scheduling only a partial subset of users. This facilitates the extension of the above result 
 to the scaling of achievable symmetric rate.
 
\begin{remark} 
\label{rem:model}
Similar achievability results can be obtained for channel models other than those considered in this paper. For example, a sum-rate of $\Theta(\ln K)$ can be achieved through bursty power transmission if the peak-power constraints are relaxed to average-power constraints. Similarly, for a Rayleigh fading channel, $\Theta(\ln K)$ scaling can be achieved by allowing only the user with the largest magnitude of channel gain to transmit. But these schemes cannot be applied under the model considered in this paper with peak-power constraints and phase fading channel. Therefore, this model is effective in bringing into sharp focus the gains in sum-rate that can be achieved through opportunistic alignment strategies.
\end{remark}

\subsection{Upper Bound of $O(\ln K)$}
\label{subsec:converse}
 For the upper bound, we restrict our attention to the class of signaling schemes described below.
 
\textbf{Single-Symbol Phase Alignment.} A signaling scheme is a single-symbol phase alignment scheme if
\begin{enumerate}
\item the encoders use real Gaussian codebooks with unit average-power constraint and linear precoding (design of transmit directions) limited to a single time slot,
\item the decoders treat interference as noise.
\end{enumerate}
The single-symbol phase alignment class consists of all possible asymmetric complex signaling schemes \cite{cadambe-etal10asymmetric-complex-signaling} with Gaussian codebooks and transmit directions restricted to single complex symbol in combination with power control. Note that the the peak-power constraint in Section~\ref{sec:problem} has been relaxed to an average-power constraint. The phase alignment through scheduling scheme described in Section~\ref{subsec:achievability} falls under this class if the same subset of users are scheduled in all time slots and the data stream for all these scheduled users are chosen to be Gaussian with unit average-power. Thus, an argument similar to that in Theorem~\ref{thm:achievability} shows that a sum-rate of $\Omega \left(\frac{\ln K}{\ln \ln K}\right)$ is achievable with single-symbol phase alignment schemes. Theorem~\ref{thm:upper-bound} conveys that the maximum achievable sum-rate in this class is $O(\ln K)$ for most channel realizations.


The result can be mathematically described as follows: \\
Assume wlog that transmitter $k$ uses transmit direction $e^{i(\alpha_k - \Theta_{kk})}$ and average-power $P_k$ ($\in [0,1]$) to transmit its data stream $\{U_k\}$ from a real Gaussian codebook. Thus the received signal at Receiver $k$ can be written as
\begin{align*}
Y_k[t] & = \sqrt{P_k} e^{i\alpha_k} U_k[t] + \sum_{j\neq k}\sqrt{P_j} e^{i(\alpha_j + \Theta_{kj} - \Theta_{jj})} U_j[t] + Z_k[t].
\end{align*}
Since the decoders treat interference as noise, a linear filter is optimal for Gaussian input. Let Receiver $k$ project its received signal along the direction $e^{i\gamma_k}$. The output of this filter is given by
\begin{align*}
\tilde{Y}_k[t] {}={} & \sqrt{P_k} \cos(\alpha_k - \gamma_k ) U_k[t]	\\
& + \sum_{j\neq k}\sqrt{P_j} \cos(\alpha_j + \Theta_{kj} - \Theta_{jj} - \gamma_k ) U_j[t] + \tilde{Z}_k[t],
\end{align*}
where $\tilde{Z}_k[t] \sim \mathcal{N}(0,\frac{1}{2})$.
Define $$\beta_{kj} := \cos^2(\alpha_j + \Theta_{kj} - \Theta_{jj} - \gamma_k) \, \forall j, k.$$ Then user $k$ obtains a rate of
\begin{align*}
R_k(\mathbf{P}, \boldsymbol{\alpha}, \boldsymbol{\gamma}) & = \ln\left(1 + \frac{P_k \beta_{kk}}{\sum_{j\neq k}P_j \beta_{kj} + \frac{1}{2}}\right).
\end{align*}
Therefore, the sum-rate achieved by a scheme with power allocation $\mathbf{P}$, transmitter precoding $\boldsymbol{\alpha}$, and receiver precoding $\boldsymbol{\gamma}$ is given by
\begin{align*}
R_{sum}(\mathbf{P}, \boldsymbol{\alpha}, \boldsymbol{\gamma}) & = \sum_{k=1}^K R_k(\mathbf{P}, \boldsymbol{\alpha}, \boldsymbol{\gamma}),
\end{align*}
and the best sum-rate achievable in this class of schemes is
\begin{align*}
\max_{\substack{\mathbf{P} \in [0,1]^K\\ (\boldsymbol{\alpha}, \boldsymbol{\gamma}) \in [-\pi, \pi)^{2K}}} \sum_{k=1}^K \ln\left(1 + \frac{P_k \beta_{kk}}{\sum_{j\neq k}P_j \beta_{kj} + \frac{1}{2}}\right). 
\end{align*}

Theorem~\ref{thm:upper-bound} shows that the above quantity is $O(\ln K) \; w.h.p.$ Again, we give here a sketch of the proof of Theorem~\ref{thm:upper-bound}. The complete proof is provided in the Appendix. 
We first show that, for asymptotic analysis, it is sufficient to consider a restricted class of scheduling strategies that do not use power control but schedule a subset of users who transmit at maximum power (Lemma~\ref{lem:power-ctrl=schedule} in the Appendix). Note that this restricted class of scheduling strategies is a subset of the original class of transmission schemes with powers $P_k$ restricted to either $0$ or $1$. We then derive a probabilistic upper bound on the sum-rate for any fixed scheduling strategy by fixing the set of users who transmit as well as their transmit and receive directions (Lemma~\ref{lem:fixed-parameters} in the Appendix). We then show that, for any fixed set of scheduled users, a slight perturbation of the transmit and receive directions does not change the sum-rate by a large amount (Lemma~\ref{lem:continuity} in the Appendix). This continuity property enables us to extend the probabilistic upper bound on the achievable sum-rate to the class of scheduling strategies with a fixed set of scheduled users with transmit and receive directions within a generic small set. A union bound over sets that cover the space of all possible transmit and receive directions gives an upper bound for the maximum achievable sum-rate for a fixed set of scheduled users. To show that no scheduling strategy can achieve a better scaling than $O(\ln K)$ $w.h.p.$ we take a union bound over all sets of scheduled users. 
\qed

The example of asymmetric complex signaling in \cite{cadambe-etal10asymmetric-complex-signaling} for a $3$-user Gaussian interference network suggests the potential benefits of optimizing the transmit directions for interference alignment. The authors show that a careful design of the transmit directions over $5$ symbol extensions can achieve $1.2$ degrees of freedom as opposed to the upper bound of $1$ degree of freedom conjectured in \cite{host-madsen-nosratinia05multiplexing-gain} for time-invariant channels. They also prove that $1.2$ degrees of freedom is the maximum achievable for the $3$-user case. This upper bound is shown to be due to the impossibility of aligning the interference signal from a transmitter at more than one unintended receiver. As mentioned in \cite{cadambe-etal10asymmetric-complex-signaling}, it is interesting to understand how this fundamental limitation affects achievable rates in the general $K$-user interference channel.

Theorem~\ref{thm:upper-bound} answers this question partially from a scaling perspective when the design of the transmit directions is limited to a single-symbol extension. As seen in the proof of Theorem~\ref{thm:upper-bound} (specifically Lemma~\ref{lem:fixed-parameters}), the conflicting goals of aligning the interference signal at different receivers limits the achievable sum-rate to $O(\ln K).$ Unlike the linear algebraic proof techniques generally used to prove upper bounds for vector interference alignment schemes (as in \cite{cadambe-etal10asymmetric-complex-signaling} to prove the upper bound for the $3$-user case), our proof bears resemblance to Khintchine's method of proving upper bounds for Diophantine approximation (for example, see \cite{dodson09diophantine-khintchine}). 

\section{Conclusion}
\label{sec:conclusion}
We considered the fully connected interference network under fixed power and delay constraints. For time-invariant interference networks, the idea of complex asymmetric signaling is one step in the direction of achieving better rates than orthogonal signaling schemes. We proposed a signaling scheme that uses opportunistic user-scheduling as a phase alignment strategy. It is shown that unlike orthogonal signaling and well-known interference alignment strategies that can achieve only a constant scaling of sum-rate with increasing number of users, the proposed scheme achieves a sum-rate that increases with the number of users in the system. The main idea underlying this interference management strategy is the utilization of the diversity natural in large networks. The proposed phase alignment scheme when modified to schedule a single subset of users across all time slots falls under the class of complex asymmetric signaling schemes. 
We show that this class of asymmetric complex signaling schemes restricted to a single complex symbol cannot achieve a much better scaling of sum-rate as compared to the proposed phase alignment scheme.





%

\bibliography{InterfAlignment}


\appendix[Proofs]
\label{sec:proofs}
In this section, we present the complete proofs of Theorem~\ref{thm:achievability} and Theorem~\ref{thm:upper-bound}. To prove the results, we can assume, without loss of generality, that $\Theta_{kk} = 0 \; \forall k$ since $\{\Theta_{kj}\}$ are i.i.d.\ random variables with uniform distribution in $[-\pi, \pi).$

\subsection{Proof of Achievability (Theorem~\ref{thm:achievability})}
The following lemma gives an upper bound on the number of colors required to color the interference graph $G$ defined in Section~\ref{subsec:achievability}.
\begin{lemma}
\label{lem:greedy-alg}
The greedy algorithm colors the graph $G$ with at most $\frac{K \ln \ln K}{\ln K - 3 \ln \ln K} + 1$ colors $w.h.p.$
\end{lemma}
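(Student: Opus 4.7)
My first step would be to recast the interference graph $G$ as an Erd\H{o}s--R\'enyi random graph. Since the $\Theta_{kj}$ are i.i.d.\ uniform on $[-\pi,\pi)$, I may assume without loss of generality that $\Theta_{kk}=0$ for all $k$; the edge condition between $k$ and $j$ then depends only on the pair $(\Theta_{kj},\Theta_{jk})$, and these pairs are independent across distinct unordered pairs $\{k,j\}$. Hence the edges of $G$ are mutually independent, so $G\sim G(K,p)$ with non-edge probability $1-p=Q^2$, where $Q:=\Pr(|\cos\Theta|\le\pi/\sqrt{\ln K})=\tfrac{2}{\pi}\arcsin(\pi/\sqrt{\ln K})=\Theta(1/\sqrt{\ln K})$; in particular $1-p=\Theta(1/\ln K)$.

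Next, I would re-express greedy coloring phase by phase: for $c=1,2,\ldots$, color class $C_c$ is exactly the greedy independent-set (IS) on the uncolored set $U_c$ taken in the original vertex order. Because the formation of $C_1,\ldots,C_{c-1}$ only exposes edges incident to $\bigcup_{c'<c}C_{c'}$, the principle of deferred decisions gives that, conditional on $(C_1,\ldots,C_{c-1})$, $G[U_c]$ is an unexposed $G(m_c,p)$ with $m_c:=|U_c|$. Each phase can therefore be analyzed as a fresh greedy IS on a random graph of the current residual size.

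The main technical step is a tail bound on $|C_c|$. When greedy IS holds a partial IS of size $a$, the next vertex joins with probability $(1-p)^a$, so the number of trials needed to reach size $s$ is a sum $T_s=\sum_{j=1}^{s}T_j$ of independent geometrics with $T_j\sim\mathrm{Geo}((1-p)^{j-1})$; equivalently, $|C_c|$ stochastically dominates $\mathrm{Bin}(m_c,(1-p)^{s-1})$ by coupling each trial to a Bernoulli with the uniformly smaller parameter $(1-p)^{s-1}$. A Chernoff lower-tail bound then yields $\Pr(|C_c|<s\mid m_c=m)\le\exp\!\bigl(-\Omega(m\,(1-p)^{s-1})\bigr)$. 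Setting $s:=\lceil(\ln K-3\ln\ln K)/\ln\ln K\rceil$ and using $1-p\ge 4/\ln K$, a direct computation gives $(1-p)^{s-1}\ge K^{-1}\,e^{\Omega(\ln K/\ln\ln K)}$, so $m(1-p)^{s-1}=\omega(\ln K)$ for every $m$ above a threshold $m^\star$ that is itself $o(K/s)$; hence $\Pr(|C_c|<s\mid m_c\ge m^\star)\le 1/K^2$.

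A union bound over the at most $K$ phases then shows that, w.h.p., every phase with $m_c\ge m^\star$ contributes $|C_c|\ge s$, so those phases consume at most $\lceil(K-m^\star)/s\rceil$ colors, and at most $m^\star=o(K/s)$ additional colors finish the residual; the total is $(1+o(1))K/s$, which is absorbed into the advertised bound $\tfrac{K\ln\ln K}{\ln K-3\ln\ln K}+1$. The hard part is Step~3: the constant "$3$" in the definition of $s$ is forced by a delicate trade-off---$s$ must be close enough to $\ln K/\ln\ln K$ to keep $K/s$ tight against the target, yet small enough that $(1-p)^{s-1}$ stays large enough for the Chernoff exponent $m(1-p)^{s-1}$ to beat $\ln K$ and survive the union bound over all $K$ phases down to a threshold $m^\star$ that remains $o(K/s)$.
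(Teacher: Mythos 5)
Your proposal is essentially correct, but it proves the lemma by a genuinely different route than the paper. The paper follows Grimmett--McDiarmid: it lets $A_k$ be the event that node $k$ is the first to require color $L+1$, observes that conditional on the coloring $(C_1,\dots,C_L)$ of the earlier nodes (which depends only on edges among those nodes) node $k$ must have a neighbor in \emph{every} color class, so $\mathbb{P}[A_k\mid C_1,\dots,C_L]=\prod_l\bigl(1-(1-p)^{|C_l|}\bigr)$, and then uses Jensen's inequality on the concave function $x\mapsto\ln(1-a^x)$ together with $\sum_l|C_l|<K$ to bound this by $\exp\bigl(-L(1/\ln K)^{K/L}\bigr)=o(1/K)$; a union bound over $k$ finishes. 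You instead decompose first-fit coloring into greedy-independent-set phases, use deferred decisions to keep the residual graph a fresh $G(m_c,p)$, and lower-bound each class size by stochastic domination with a $\mathrm{Bin}(m_c,(1-p)^{s-1})$ plus a Chernoff bound. Both arguments are sound; yours is more constructive (it tells you each color class has size about $s$), while the paper's is shorter and needs no phase bookkeeping. Two small points on your version: (i) your final count is $\lceil(K-m^\star)/s\rceil+m^\star=(1+o(1))K/s$, which matches the lemma only up to a $(1+o(1))$ factor rather than the literal ``$+1$'' additive constant claimed; the extra $m^\star$ colors for the residual are not actually absorbed into the stated bound unless you retune $s$ (this is immaterial for Theorem~1, which only needs $L=O(K\ln\ln K/\ln K)$); (ii) your Chernoff step implicitly requires $s=o\bigl(m(1-p)^{s-1}\bigr)$ for $m\ge m^\star$, which does hold here but should be checked explicitly since the lower tail bound degenerates when the target $s$ is close to the mean.
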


\begin{proof}
We first note that the interference graph is an Erd\"{o}s-R\'{e}nyi random graph $G(K,p),$ where $p$ is the probability that there is an edge between any two nodes. There is no edge between nodes $k$ and $j$ if and only if $\cos^2\Theta_{kj} \leq \frac{\pi^2}{\ln K}$ and $\cos^2\Theta_{jk} \leq \frac{\pi^2}{\ln K}.$ We now derive bounds for the edge probability $p.$ Since $\{\Theta_{kj}\}_{j \neq k}$ are  i.i.d.\ $\sim Unif[-\pi, \pi),$
\begin{align}
1 - p & = \left(\mathbb{P}\left[\cos^2\Theta_{kj} \leq \frac{\pi^2}{\ln K}\right]\right)^2	\nonumber	\\
 & = \left(\mathbb{P}\left[\sin^2\Theta_{kj} \leq \frac{\pi^2}{\ln K}\right]\right)^2	\nonumber	\\
 & \geq \left(\mathbb{P}\left[\Theta^2_{kj} \leq \frac{\pi^2}{\ln K}\right]\right)^2	\nonumber	\\
 & = \frac{1}{\ln K}. \label{eqn:edge-prob-ub}
\end{align}
The following lower bound for $p$ (used in the proof of Theorem~\ref{thm:achievability}) can be shown in a similar fashion.
\begin{align}
1 - p  & = \left(\mathbb{P}\left[\sin^2\Theta_{kj} \leq \frac{\pi^2}{\ln K}\right]\right)^2	\nonumber	\\
 & \leq \left(\mathbb{P}\left[ \min \lbrace \lvert \Theta_{kj} \rvert, \pi - \lvert \Theta_{kj} \rvert \rbrace  \leq \frac{2\pi}{\sqrt{\ln K}}\right]\right)^2	\nonumber	\\
 & = \frac{16}{\ln K}. \label{eqn:edge-prob-lb}
\end{align}
The remainder of the proof follows a similar structure as in \cite{grimmett-mcdiarmid75colouring-random-graphs}. Let
\begin{align}
\label{eqn:num-colors}
L = \left\lceil \frac{K \ln \ln K}{\ln K - 3 \ln \ln K} \right\rceil,
\end{align}
and let $A_k$ be the event that node $k$ is the first to get color $L+1$. If $k$ is the first node to get color $L+1$, then all the preceding $k-1$ nodes were colored using exactly $L$ colors. Let $(C_1, C_2, \dots C_L)$ denote the coloring of the first $k-1$ nodes, i.e., for any $j,l$ such that $1 \leq j \leq k-1$ and $1 \leq l \leq L$, if node $j$ gets color $l$, then $j \in C_l$. For node $k$ to get color $L+1$, it must have at least one neighbor in each of the color classes. Therefore,
\begin{align*}
\mathbb{P}\left[A_k \vert (C_1, C_2, \dots C_L)\right]
 & = \prod_{l=1}^L \left( 1 - (1-p)^{ \lvert C_l \rvert } \right)	\\
& \leq \prod_{l=1}^L \left( 1 - \left(\frac{1}{\ln K}\right)^{ \lvert C_l \rvert } \right)	\\
 & \leq \left( 1 - \left(\frac{1}{\ln K}\right)^{\sum_{l=1}^L \lvert C_l \rvert /L} \right)^L	\\
 & < \left( 1 - \left(\frac{1}{\ln K}\right)^{K/L} \right)^L	\\
 & < \exp\left( -L \left(\frac{1}{\ln K}\right)^{K/L} \right),
\end{align*}
where the first inequality is due to \eqref{eqn:edge-prob-ub} and the second inequality follows from Jensen's inequality applied to the concave function $f(x) = \ln(1-a^x)$ for $a \in [0,1)$ and $x > 0.$
Using the value of $L$ from \eqref{eqn:num-colors},
\begin{align*}
\ln\left( L \left(\frac{1}{\ln K}\right)^{K/L} \right) & = \ln L - \frac{K}{L} \ln\ln K	\\
& \geq \ln K - \ln\ln K + 3 \ln\ln K - \ln K	\\
& = 2 \ln\ln K,
\end{align*}
which implies
\begin{align*}
 \exp\left( -L \left(\frac{1}{\ln K}\right)^{K/L} \right) = o\left(\frac{1}{K}\right).
\end{align*}
It follows that $\mathbb{P}[A_k] = o(\frac{1}{K}) \; \forall k.$ 

Note that $\cup_{k=1}^K A_k$ is the event that the greedy algorithm requires more than $L$ colors. Since $$\mathbb{P}[\cup_{k=1}^K A_k] \leq \sum_{k=1}^K \mathbb{P}[A_k] = o(1),$$ we conclude that the greedy algorithm colors the graph in at most $L$ colors $w.h.p.$ 
\end{proof}
We now prove Theorem~\ref{thm:achievability}.
\begin{proof}[Proof of Theorem~\ref{thm:achievability}]
It follows from Lemma~\ref{lem:greedy-alg} that the greedy algorithm partitions the nodes into $L = O\left(\frac{K\ln \ln K}{\ln K}\right)$ independent subsets. Let $\alpha(G)$ be the size of the maximum independent set in the graph $G.$ It is known (\cite{bollobas-erdos76cliques-random-graphs, matula76clique-size-random-graph}) that, if $p > K^{-\epsilon}$ for every $\epsilon > 0,$ the size of the largest independent set in a $G(K,p)$ has the following upper bound.
 $$\alpha\left(G(K,p)\right) \leq \frac{2\ln K}{\ln\left(\frac{1}{1-p}\right)} \; w.h.p.$$ 
 Since we have $1-p \leq \frac{16}{\ln K}$ from \eqref{eqn:edge-prob-lb}, this result implies that 
 $$\alpha(G) = O\left(\frac{\ln K}{\ln\ln K}\right) \; w.h.p.$$
 This implies that each color class (independent subset) has $O\left(\frac{\ln K}{\ln\ln K}\right)$ nodes.
 
  Given the partition, each independent subset is scheduled in a round-robin fashion. Since the effective cross gain between any pair of nodes in an independent subset is not more than $\frac{\pi}{\sqrt{\ln K}},$ in every time slot, the power of the interference signal for any scheduled node is bounded from above by $\frac{\pi^2 \alpha(G)}{\ln K} = O\left(\frac{1}{\ln\ln K}\right)$ $w.h.p.$ 
Given  that the total interference power at any node is $O\left(\frac{1}{\ln 	\ln K}\right),$ a simple binary coding scheme can be used to achieve a positive rate 
across the scheduled time slots. Since each node is scheduled at least once in every $L$ time slots, the rate achieved by any user is $\Omega\left(\frac{\ln K}{K\ln \ln K}\right),$ which implies that the sum-rate achieved is $\Omega \left(\frac{\ln K}{\ln \ln K}\right) \; w.h.p.$ 
\end{proof}

\subsection{Proof of Upper Bound (Theorem~\ref{thm:upper-bound})}
The following lemma shows that, in order to prove Theorem~\ref{thm:upper-bound}, it is sufficient to consider the restricted subclass of single-symbol phase alignment schemes in which a subset of users are scheduled for transmission and the scheduled users transmit at the maximum power.
\begin{lemma}
\label{lem:power-ctrl=schedule}
For any $\mathbf{P} \in [0,1]^K, (\boldsymbol{\alpha}, \boldsymbol{\gamma}) \in [-\pi, \pi)^{2K}$, 
\begin{align*}
R_{sum}(\mathbf{P}, \boldsymbol{\alpha}, \boldsymbol{\gamma}) & \leq 2\max_{\mathcal{S} \subset [K]} R_{sum}(\mathbf{1}_{\mathcal{S}}, \boldsymbol{\alpha}, \boldsymbol{\gamma}),
\end{align*}
 where $\mathbf{1}_{\mathcal{S}}$ is the $K$-length vector with the $k^{th}$ component equal to $1$ if $k \in \mathcal{S}$ and $0$ otherwise.
\end{lemma}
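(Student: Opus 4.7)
The plan is to prove Lemma~\ref{lem:power-ctrl=schedule} via a randomized-rounding argument: I would draw a random subset $\mathcal{S} \subset [K]$ by including each user $k$ independently with probability $q_k := \min(2 P_k, 1)$, show that $\mathbb{E}[R_{sum}(\mathbf{1}_{\mathcal{S}}, \boldsymbol{\alpha}, \boldsymbol{\gamma})] \geq \tfrac{1}{2} R_{sum}(\mathbf{P}, \boldsymbol{\alpha}, \boldsymbol{\gamma})$, and conclude by noting that the deterministic maximum over $\mathcal{S}$ dominates the expectation. Throughout, set $D_k := \sum_{j \neq k} P_j \beta_{kj} + 1/2$ and $u_k := \beta_{kk}/D_k$, so that $R_k(\mathbf{P}, \boldsymbol{\alpha}, \boldsymbol{\gamma}) = \ln(1 + P_k u_k)$. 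A structural ingredient I would exploit repeatedly is that $u_k \leq 2$ for every $k$, since $\beta_{kk} = \cos^2(\alpha_k - \gamma_k) \in [0,1]$ and $D_k \geq 1/2$.

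The first move is a Jensen bound on $\mathbb{E}[R_k(\mathbf{1}_{\mathcal{S}}, \boldsymbol{\alpha}, \boldsymbol{\gamma})]$. Conditioned on $k \in \mathcal{S}$, the interference $I_k := \sum_{j \in \mathcal{S} \setminus \{k\}} \beta_{kj}$ is independent of that event and has mean $\sum_{j \neq k} q_j \beta_{kj} \leq 2 \sum_{j \neq k} P_j \beta_{kj}$, so $\mathbb{E}[I_k] + 1/2 \leq 2 D_k$. Since $x \mapsto \ln(1 + \beta_{kk}/(x+1/2))$ is convex in $x$,
\begin{align*}
\mathbb{E}\bigl[R_k(\mathbf{1}_{\mathcal{S}}, \boldsymbol{\alpha}, \boldsymbol{\gamma})\bigr] \geq q_k \ln\Bigl(1 + \tfrac{\beta_{kk}}{\mathbb{E}[I_k] + 1/2}\Bigr) \geq q_k \ln\bigl(1 + u_k/2\bigr).
\end{align*}
It then suffices to verify the pointwise inequality $q_k \ln(1 + u_k/2) \geq \tfrac{1}{2} \ln(1 + P_k u_k)$. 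When $P_k \geq 1/2$, $q_k = 1$ and this reduces to $(1 + u_k/2)^2 \geq 1 + u_k$, which is immediate. When $P_k < 1/2$, $q_k = 2 P_k$ and the claim becomes $4 P_k \ln(1 + u_k/2) \geq \ln(1 + P_k u_k)$; writing $h(P_k)$ for the difference, I would check $h(0) = 0$, $h''(P_k) = u_k^2/(1 + P_k u_k)^2 \geq 0$, and $h'(0) = 4\ln(1 + u_k/2) - u_k$, which is nonnegative on $u_k \in [0,2]$ since its $u_k$-derivative $(1 - u_k/2)/(1 + u_k/2)$ is nonnegative on that interval. Convexity of $h$ in $P_k$ then gives $h' \geq h'(0) \geq 0$, hence $h \geq 0$.

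Summing the per-user inequality over $k$ yields $\mathbb{E}[R_{sum}(\mathbf{1}_{\mathcal{S}}, \boldsymbol{\alpha}, \boldsymbol{\gamma})] \geq \tfrac{1}{2} R_{sum}(\mathbf{P}, \boldsymbol{\alpha}, \boldsymbol{\gamma})$, and the lemma follows by selecting the best $\mathcal{S}$ in the support. The main delicate choice is $q_k = \min(2 P_k, 1)$: the doubling supplies the signal-gain boost needed in the low-power regime, while the cap at $1$ prevents the expected interference from inflating by more than a factor of two, so the two effects balance exactly. The hard part is the pointwise inequality in the $P_k < 1/2$ case; this is precisely where the noise-floor bound $u_k \leq 2$ is indispensable, and it also explains why the constant $2$ (rather than something smaller) appears in the lemma.
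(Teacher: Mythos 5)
Your argument is correct, and it is a genuinely different route from the paper's. The paper first linearizes via $\ln(1+x)\leq x$, observes that the resulting objective $\Phi(\mathbf{P})=\sum_k P_k\beta_{kk}/(\sum_{j\neq k}P_j\beta_{kj}+\tfrac12)$ is convex in each coordinate $P_k$ separately, pushes the maximum to the extreme points $\{0,1\}^K$ by maximizing one coordinate at a time, and then converts back using $x\leq 2\ln(1+x)$ on $[0,2]$ --- the factor $2$ entering through this last sandwich, which is valid precisely because the noise floor $\tfrac12$ caps each SINR term at $2$. You instead keep the logarithms throughout and use randomized rounding with inclusion probabilities $\min(2P_k,1)$, Jensen's inequality applied to the convex map $x\mapsto\ln(1+\beta_{kk}/(x+\tfrac12))$, and a one-dimensional calculus verification of $q_k\ln(1+u_k/2)\geq\tfrac12\ln(1+P_ku_k)$; here the factor $2$ arises from balancing the signal boost of doubling the inclusion probability against the at-most-doubled expected interference, and the same cap $u_k\leq 2$ is what makes the pointwise inequality close. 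I checked the details --- the independence of $I_k$ from the event $k\in\mathcal{S}$, the direction of Jensen, both cases of the pointwise inequality, and the final step that the maximum over subsets dominates the expectation --- and they all hold. The paper's deterministic extreme-point argument is somewhat shorter; your probabilistic version is a clean alternative that works directly with the rates rather than their linearization and makes the origin of the constant $2$ transparent in a different way.
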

\begin{proof}
Using the inequality $\ln(1+x) \leq x \; \forall x \geq 0,$ we have
\begin{align*}
R_{sum}(\mathbf{P}, \boldsymbol{\alpha}, \boldsymbol{\gamma})
 & = \sum_{k=1}^K \ln\left(1 + \frac{P_k \beta_{kk}}{\sum_{j\neq k}P_j \beta_{kj} + \frac{1}{2}}\right) 	\\
& \leq \sum_{k=1}^K  \frac{P_k \beta_{kk}}{\sum_{j\neq k}P_j \beta_{kj} + \frac{1}{2}}.
\end{align*}
Now, let $$\Phi(\mathbf{P}) := \sum_{k=1}^K  \frac{P_k \beta_{kk}}{\sum_{j\neq k}P_j \beta_{kj} + \frac{1}{2}}.$$ For any fixed $k$ and fixed $(P_j)_{j \neq k},$ $\Phi$ is a convex function of $P_k$.\footnote{Note that $\Phi$ is \emph{not} a convex function of the \emph{vector} $\mathbf{P}.$} This can be verified by observing that the $k^{th}$ term in the sum is a linear function of $P_k$ and all other terms are convex functions of $P_k.$ Since the maximum value of a convex function on a compact and convex set is achieved at an extreme point, maximizing over all $P_1 \in [0,1]$ for any fixed $(P_j)_{j \neq 1}$ gives the following upper bound for $\Phi$:
\begin{align*}
\Phi(\mathbf{P}) & \leq \max_{\tilde{P}_1 \in \{0,1\}} \Biggl( \frac{\tilde{P}_1 \beta_{11}}{\sum_{j\neq 1}P_j \beta_{1j} + \frac{1}{2}}	\\
&\phantom{ \leq \max_{\tilde{P}_1 \in \{0,1\}} } \quad + \sum_{k=2}^K  \frac{P_k \beta_{kk}}{\tilde{P}_1 \beta_{k1} + \sum_{j\neq 1,k}P_j \beta_{kj} + \frac{1}{2}} \Biggr).
\end{align*}
By successively maximizing over $P_k$ for $k = 2, 3, \dots, K$ with the other variables fixed, we get
\begin{align*}
\Phi(\mathbf{P}) & \leq \max_{\tilde{\mathbf{P}} \in \{0,1\}^K} \sum_{k=1}^K  \frac{\tilde{P}_k \beta_{kk}}{\sum_{j\neq k}\tilde{P}_j \beta_{kj} + \frac{1}{2}}	\\
& \leq 2\max_{\tilde{\mathbf{P}} \in \{0,1\}^K} \sum_{k=1}^K  \ln\left(1 + \frac{\tilde{P}_k \beta_{kk}}{\sum_{j\neq k}\tilde{P}_j \beta_{kj} + \frac{1}{2}}\right) 	\\
& \leq 2\max_{\tilde{\mathbf{P}} \in \{0,1\}^K} R_{sum}(\tilde{\mathbf{P}}, \boldsymbol{\alpha}, \boldsymbol{\gamma})	\\
& = 2\max_{\mathcal{S} \subset [K]} R_{sum}(\mathbf{1}_{\mathcal{S}}, \boldsymbol{\alpha}, \boldsymbol{\gamma}),
\end{align*}
where the second inequality follows from the fact that $x \leq 2\ln(1+x) \; \forall x \in [0,2].$ We can therefore conclude that
\begin{align*}
R_{sum}(\mathbf{P}, \boldsymbol{\alpha}, \boldsymbol{\gamma}) & \leq 2\max_{\mathcal{S} \subset [K]} R_{sum}(\mathbf{1}_{\mathcal{S}}, \boldsymbol{\alpha}, \boldsymbol{\gamma}).
\end{align*}
\end{proof}

We now show that, for a fixed subset of scheduled users $\mathcal{S} \subset [K]$ and for a fixed set of transmit and receive directions $(\boldsymbol{\alpha}, \boldsymbol{\gamma})$, the tail probability for the sum-rate decays at least exponentially with rate of decay proportional to the size of the scheduled set.
\begin{lemma}
\label{lem:fixed-parameters}
For any fixed $\mathcal{S} \subset [K]$ with $|\mathcal{S}| = s$ and fixed $(\boldsymbol{\alpha}, \boldsymbol{\gamma}) \in [-\pi, \pi)^{2K}$, $$\mathbb{P}\left[R_{sum}(\mathbf{1}_{\mathcal{S}}, \boldsymbol{\alpha}, \boldsymbol{\gamma}) > r\right] \leq e^{-\left(\frac{r}{32} - 1\right)s},$$ where $\mathbf{1}_{\mathcal{S}}$ is the $K$-length vector with the $k^{th}$ component equal to $1$ if $k \in \mathcal{S}$ and $0$ otherwise.
\end{lemma}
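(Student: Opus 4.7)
The plan is to prove the lemma via a Chernoff-type tail bound that exploits a crucial independence structure: under the WLOG normalization $\Theta_{kk}=0$, the per-user rates $\{R_k\}_{k\in\mathcal{S}}$ are mutually independent, since each $R_k$ depends only on $\{\Theta_{kj}\}_{j\in\mathcal{S}\setminus\{k\}}$ and these index sets are disjoint across $k$. Writing $y_k := \tfrac{1}{2}+\sum_{j\in\mathcal{S},\, j\neq k}\beta_{kj}$ and $R_k=\ln(1+\beta_{kk}/y_k)$, I will apply the generic Chernoff inequality $\mathbb{P}[R_{sum}>r]\leq e^{-\lambda r}\prod_{k\in\mathcal{S}}\mathbb{E}[e^{\lambda R_k}]$ with a carefully chosen $\lambda$ depending on $s$.

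Before computing the MGFs I would record two elementary deterministic bounds. Since $\beta_{kk}\leq 1$ and $\ln(1+x)\leq x$, one gets $R_k\leq 1/y_k$, which upgrades the MGF estimate to $\mathbb{E}[e^{\lambda R_k}]\leq \mathbb{E}[e^{\lambda/y_k}]$; this is the right quantity to attack because $y_k$ is an affine function of the i.i.d.\ bounded summands $\beta_{kj}=\cos^2(\cdot)$, each with mean $\tfrac{1}{2}$. Taking $\lambda=s/32$ produces the desired $-rs/32$ tail exponent, leaving us to show that $\mathbb{E}[e^{\lambda/y_k}]\leq e$ uniformly in $k$ so that the product over the $s$ factors contributes at most $e^{s}$.

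The heart of the argument is this per-user MGF bound. The idea is to split the expectation along the concentration event $\{y_k\geq s/4\}$: on that event $\lambda/y_k\leq 1/8$, so the contribution is bounded by $e^{1/8}$, while on the complement the crude bound $1/y_k\leq 2$ gives $e^{2\lambda}=e^{s/16}$, multiplied by the bad-event probability. Since $I_k=\sum_{j\neq k}\beta_{kj}$ is a sum of $s-1$ independent $[0,1]$-valued variables with mean $(s-1)/2$, Hoeffding's inequality controls $\mathbb{P}[y_k<s/4]$ at an $e^{-\Theta(s)}$ rate with key constant $1/8$, exactly cancelling the $e^{s/16}$ growth on the bad event and leaving an $O(1)$ residual. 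The edge case $s=1$ (where $y_k=1/2$ is deterministic) is handled separately using only the deterministic bound $R_{sum}\leq s\ln 3<2s$, which renders the stated inequality trivial.

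The main delicacy will be the calibration of $\lambda$: it has to scale linearly in $s$ so that the tail exponent $-\lambda r$ is of the right order to later beat a $\binom{K}{s}$-type union bound over scheduled sets, yet slowly enough that the Hoeffding decay rate $1/8$ for the left tail of $I_k$ dominates the corresponding $e^{2\lambda}$ worst-case MGF growth on the bad event. The constant $1/32$ is essentially dictated by this race between the concentration rate of $y_k$ around $s/2$ and its worst-case lower bound of $1/2$; the reason a Chernoff argument succeeds at all here is that the heavy-tailed contribution of $1/y_k$ near its maximum value $2$ is forced to be exponentially rare in $s$.
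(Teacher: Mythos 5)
Your proposal is correct and follows essentially the same route as the paper: exploit the independence of $\{R_k\}_{k\in\mathcal{S}}$, apply the generic Chernoff bound with $\lambda = s/32$, and control each factor $\mathbb{E}[e^{\lambda R_k}]$ by splitting on a concentration event for the interference sum $\sum_{j\neq k}\beta_{kj}$ around its mean $(s-1)/2$, so that the product contributes at most $e^{s}$. The only (immaterial) differences are that you first linearize via $\ln(1+x)\leq x$ and invoke Hoeffding's inequality, whereas the paper bounds $\mathbb{E}[(1+\beta_{kk}/y_k)^{\lambda}]$ directly and uses the multiplicative Chernoff bound $\mathbb{P}[\sum_j \beta_{kj} < (s-1)/4]\leq e^{-(s-1)/16}$.
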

\begin{proof}
Since $\{\Theta_{kj}\}_{k\neq j}$ are  i.i.d.\ random variables with uniform distribution in $[-\pi, \pi)$, $\{\beta_{kj}\}_{j \neq k}$ are   i.i.d.\ with mean $\frac{1}{2}$. It follows that $\{R_k(\mathbf{1}_{\mathcal{S}}, \boldsymbol{\alpha}, \boldsymbol{\gamma})\}_{k \in \mathcal{S}}$ are  independent random variables. Using the Chernoff bound for independent random variables, we have
\begin{align}
\label{eqn:chernoff-outer}
\mathbb{P}\left[R_{sum}(\mathbf{1}_{\mathcal{S}}, \boldsymbol{\alpha}, \boldsymbol{\gamma}) > r\right] \leq e^{-\lambda r}\prod_{k \in \mathcal{S}}\mathbb{E}\left[e^{\lambda R_k(\mathbf{1}_{\mathcal{S}}, \boldsymbol{\alpha}, \boldsymbol{\gamma})}\right]
\end{align}
 for every $\lambda > 0.$ To find the expected value in the above expression, we once again use the Chernoff bound, but this time a version that holds for bounded  i.i.d.\ random variables (\cite[Theorem 4.5]{mitzenmacher-upfal05probability}). This bound can be stated as follows: if $\{B_j\}_{j=1}^n \in [0,1]$ are  i.i.d.\ with mean $\mathbb{E}B$, then 
$$\mathbb{P}\left[\sum_{j=1}^n B_j < (1-\delta)n\mathbb{E}B\right] \leq e^{-\frac{\delta^2}{2}n\mathbb{E}B}$$ for any $\delta \in (0, 1).$
Now, for any $k \in \mathcal{S},$ applying the above Chernoff bound to the  i.i.d.\ random variables $\{\beta_{kj}\}_{j \in \mathcal{S}, j \neq k} \in [0,1]$ with mean $\frac{1}{2}$ gives
\begin{align*}
\mathbb{P}\left[\sum_{j \in \mathcal{S}, j \neq k} \beta_{kj} < \frac{s-1}{4}\right] & \leq e^{-\frac{s-1}{16}}.
\end{align*}
From this,
\begin{align*}
& \mathbb{E}\left[e^{\lambda R_k(\mathbf{1}_{\mathcal{S}}, \boldsymbol{\alpha}, \boldsymbol{\gamma})}\right] \\
& = \mathbb{E}\left[\left( 1 + \frac{\beta_{kk}}{\sum_{j \in \mathcal{S}\backslash\{k\}} \beta_{kj} + \frac{1}{2}} \right)^\lambda \right]	\\
& \leq \left( 1 + \frac{1}{\frac{s-1}{4} + \frac{1}{2}} \right)^\lambda  + \left( 1 + \frac{1}{0 + \frac{1}{2}} \right)^\lambda e^{-\frac{s-1}{16}}	\\
& \leq e^{\frac{4\lambda}{s+1}} + 3^\lambda e^{-\frac{s-1}{16}}.
\end{align*}
With $\lambda = s/32,$ this yields $$\mathbb{E}\left[e^{\frac{s}{32} R_k(\mathbf{1}_{\mathcal{S}}, \boldsymbol{\alpha}, \boldsymbol{\gamma})}\right] \leq e, \; \forall s \geq 0.$$ Applying this inequality in \eqref{eqn:chernoff-outer} shows that
\begin{align*}
\mathbb{P}\left[R_{sum}(\mathbf{1}_{\mathcal{S}}, \boldsymbol{\alpha}, \boldsymbol{\gamma}) > r\right] \leq e^{-\left(\frac{r}{32} - 1\right)s}.
\end{align*}
\end{proof}

The next lemma shows that for any channel and for any subset of scheduled users, the difference between sum-rates obtained by two sets of transmit and receive directions that are \emph{close} is bounded by a constant.
\begin{lemma}
\label{lem:continuity}
For any scheduled subset of users, $\mathcal{S} \subset [K]$ with $ \lvert \mathcal{S} \rvert  = s,$ and for any $(\boldsymbol{\alpha}, \boldsymbol{\gamma}), (\boldsymbol{\alpha}', \boldsymbol{\gamma}') \in [-\pi, \pi)^{2K}$ such that $$\max_{k \in \mathcal{S}}(\max ( \lvert \alpha_k - \alpha_k' \rvert ,  \lvert \gamma_k - \gamma_k' \rvert )) \leq \frac{1}{2s^2},$$ the difference between the sum-rates satisfies $$ \lvert R_{sum}(\mathbf{1}_{\mathcal{S}}, \boldsymbol{\alpha}, \boldsymbol{\gamma}) - R_{sum}(\mathbf{1}_{\mathcal{S}}, \boldsymbol{\alpha}', \boldsymbol{\gamma}') \rvert  \leq 4.$$
\end{lemma}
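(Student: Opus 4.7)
The plan is to prove Lemma~\ref{lem:continuity} by cascading three Lipschitz-style estimates: first at the level of the coefficients $\beta_{kj}$, then at the level of the signal and signal-plus-interference powers, and finally at the level of the per-user rates $R_k$.

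First I would write each per-user rate in the form $R_k = \ln(N_k) - \ln(D_k)$, where $D_k = \tfrac12 + \sum_{j \in \mathcal{S}\setminus\{k\}} \beta_{kj}$ is the interference-plus-noise power and $N_k = D_k + \beta_{kk}$ is the total received signal power after the receive projection. This splits $R_{sum}$ into a sum of $2s$ terms of the form $\pm\ln(\cdot)$, and since both $N_k$ and $D_k$ are bounded below by $\tfrac12$, $\ln$ is $2$-Lipschitz on the relevant range.

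Next I would control $|\beta_{kj} - \beta_{kj}'|$. The only inputs to $\beta_{kj} = \cos^2(\alpha_j + \Theta_{kj} - \Theta_{jj} - \gamma_k)$ that change under perturbation are $\alpha_j$ and $\gamma_k$, so the argument of $\cos^2$ moves by at most $|\alpha_j - \alpha_j'| + |\gamma_k - \gamma_k'| \le 1/s^2$. Since $\cos^2$ has derivative $-\sin(2x)$, bounded in magnitude by $1$, this yields $|\beta_{kj} - \beta_{kj}'| \le 1/s^2$. Summing over the $s$ indices in $\mathcal{S}$ gives $|N_k - N_k'| \le 1/s$ and $|D_k - D_k'| \le 1/s$.

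Combining with the $2$-Lipschitz bound on $\ln$ on $[\tfrac12, \infty)$, each of the $2s$ logarithmic terms shifts by at most $2 \cdot (1/s)/(1/2) \cdot \tfrac12 = 2/s$, so $|R_k - R_k'| \le 4/s$, and summing over the $s$ users in $\mathcal{S}$ yields the desired bound of $4$. There is no real obstacle here beyond booking the constants correctly; the only subtle point is making sure the factor $1/(2s^2)$ in the hypothesis is tight enough to survive being multiplied by $s$ (when summing over $j$) and then divided by the smallest possible value $\tfrac12$ of $N_k, D_k$ (when applying the Lipschitz bound on $\ln$), which indeed it is.
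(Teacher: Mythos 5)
Your proof is correct and rests on the same mechanism as the paper's: the noise floor of $\tfrac{1}{2}$ makes $\ln$ $2$-Lipschitz on the relevant range, the derivative of $\cos^2$ is bounded by $1$ in magnitude, and the factors of $s$ incurred by summing over interferers and then over users are exactly cancelled by the $\frac{1}{2s^2}$ in the hypothesis. The paper organizes the same bookkeeping as a mean-value-theorem bound on the gradient of $R_{sum}$ (asserting $\lvert \partial R_{sum}/\partial\alpha_k\rvert, \lvert \partial R_{sum}/\partial\gamma_k\rvert \leq 4s$ without detail), whereas your decomposition $R_k = \ln N_k - \ln D_k$ chains the Lipschitz estimates explicitly through $\beta_{kj}$, $N_k$, and $D_k$; the two arguments are equivalent in substance, and yours has the minor virtue of supplying the derivative bound the paper leaves to the reader.
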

\begin{proof}
For a fixed $\mathcal{S} \subset [K]$, let $f: [-\pi, \pi)^{2K} \rightarrow \mathbb{R}$ be defined as
\begin{align*}
f(\boldsymbol{\alpha}, \boldsymbol{\gamma}) := R_{sum}(\mathbf{1}_{\mathcal{S}}, \boldsymbol{\alpha}, \boldsymbol{\gamma}).
\end{align*}
Since $f$ is a differentiable function, by the Mean-Value Theorem, there exists $v \in [0,1]$ such that
\begin{align}
 \label{eqn:mean-value-thm}
 & f(\boldsymbol{\alpha}, \boldsymbol{\gamma}) - f(\boldsymbol{\alpha}', \boldsymbol{\gamma}')	\nonumber	\\
 & = \nabla f(v\boldsymbol{\alpha}+(1-v)\boldsymbol{\alpha}', v\boldsymbol{\gamma}+(1-v)\boldsymbol{\gamma}')^\intercal (\boldsymbol{\alpha} - \boldsymbol{\alpha}', \boldsymbol{\gamma} - \boldsymbol{\gamma}').
\end{align}
For any $k \notin \mathcal{S},$ $\frac{\partial f}{\partial \alpha_k}, \frac{\partial f}{\partial \gamma_k} = 0.$ 
For any $k \in \mathcal{S},$ it is fairly straightforward to show that $ \lvert \frac{\partial f}{\partial \alpha_k} \rvert  \leq 4s$ and $ \lvert \frac{\partial f}{\partial \gamma_k} \rvert  \leq 4s.$ Substituting these inequalities in \eqref{eqn:mean-value-thm}, we get
\begin{align*}
 \lvert f(\boldsymbol{\alpha}, \boldsymbol{\gamma}) - f(\boldsymbol{\alpha}', \boldsymbol{\gamma}') \rvert
   & \leq 4s\sum_{k \in \mathcal{S}}\left( \lvert \alpha_k - \alpha'_k \rvert  +  \lvert \gamma_k - \gamma'_k \rvert \right).
\end{align*}
For any $(\boldsymbol{\alpha}, \boldsymbol{\gamma}), (\boldsymbol{\alpha}', \boldsymbol{\gamma}') \in [-\pi, \pi)^{2K}$ such that $$\max_{k \in \mathcal{S}}\left(\max ( \lvert \alpha_k - \alpha_k' \rvert ,  \lvert \gamma_k - \gamma_k' \rvert )\right) \leq \frac{1}{2s^2},$$ we have
\begin{align*}
\sum_{k \in \mathcal{S}}\left( \lvert \alpha_k - \alpha'_k \rvert  +  \lvert \gamma_k - \gamma'_k \rvert \right) \leq \frac{1}{s}.
\end{align*}
Therefore, the difference between the sum-rates satisfies
\begin{align*}
\lvert R_{sum}(\mathbf{1}_{\mathcal{S}}, \boldsymbol{\alpha}, \boldsymbol{\gamma})
  - R_{sum}(\mathbf{1}_{\mathcal{S}}, \boldsymbol{\alpha}', \boldsymbol{\gamma}') \rvert
 & \leq 4.
\end{align*}
\end{proof}
We now prove Theorem~\ref{thm:upper-bound}. 
\begin{proof}[Proof of Theorem~\ref{thm:upper-bound}]
We show that within the subclass of single-symbol phase alignment schemes that schedule users transmitting at maximum power, it is not possible to achieve a sum-rate better than $O(\ln K) \; w.h.p.$ i.e.,
\begin{align}
\label{eqn:schedule-upper-bound}
\max_{\substack{\mathbf{P} \in \{0,1\}^K\\ (\boldsymbol{\alpha}, \boldsymbol{\gamma}) \in [-\pi, \pi)^{2K}}}
 R_{sum}(\mathbf{P}, \boldsymbol{\alpha}, \boldsymbol{\gamma}) & = O(\ln K) \; w.h.p.
\end{align}
The general result then follows from Lemma~\ref{lem:power-ctrl=schedule}, which shows that it is sufficient to consider this subclass of scheduling schemes for asymptotic analysis.

To prove \eqref{eqn:schedule-upper-bound}, we use Lemmas~\ref{lem:fixed-parameters} and \ref{lem:continuity} to obtain a probabilistic upper bound on the achievable sum-rate for a fixed scheduled set when the transmit and receive directions are restricted within in a small set that we call a \emph{T-set}. To extend this upper bound to the set of all possible transmit and receive directions, we use union bound over T-sets that cover $[-\pi, \pi)^{2K}.$ 

The T-set $T(\mathcal{S}, \boldsymbol{\alpha}, \boldsymbol{\gamma}),$ parametrized by a subset of scheduled users $\mathcal{S} \subseteq [K]$ and a set of transmit and receiver directions $(\boldsymbol{\alpha}, \boldsymbol{\gamma})\in [-\pi, \pi)^{2K},$ is a subset of $[-\pi, \pi)^{2K}$ defined as follows: $(\boldsymbol{\alpha}', \boldsymbol{\gamma}') \in T(\mathcal{S}, \boldsymbol{\alpha}, \boldsymbol{\gamma})$ if
$$\max_{k \in \mathcal{S}}(\max ( \lvert \alpha_k - \alpha_k' \rvert ,  \lvert \gamma_k - \gamma_k' \rvert )) \leq \frac{1}{2 \lvert \mathcal{S} \rvert ^2}.$$
In words, $T(\mathcal{S}, \boldsymbol{\alpha}, \boldsymbol{\gamma})$ is a cylinder set in the space of all possible transmit and receive directions $(\boldsymbol{\alpha}', \boldsymbol{\gamma}') \in [-\pi, \pi)^{2K}$ with the set of transmit and receive directions for user set $\mathcal{S}$ restricted to the $ \lvert \mathcal{S} \rvert $-dimensional hypercube of length $\frac{1}{ \lvert \mathcal{S} \rvert ^2}$ with $(\boldsymbol{\alpha}, \boldsymbol{\gamma})$ as the center. 

Now consider a fixed user set $\mathcal{S}$ with $ \lvert \mathcal{S} \rvert  = s$. For any fixed $(\boldsymbol{\alpha}, \boldsymbol{\gamma}) \in [-\pi, \pi)^{2K},$ combining Lemmas~\ref{lem:fixed-parameters} and \ref{lem:continuity} gives
\begin{align}
\label{eqn:T-set}
& \mathbb{P}\left[\max_{(\boldsymbol{\alpha}', \boldsymbol{\gamma}') \in T(\mathcal{S}, \boldsymbol{\alpha}, \boldsymbol{\gamma})}R_{sum}(\mathbf{1}_{\mathcal{S}}, \boldsymbol{\alpha}', \boldsymbol{\gamma}') > r + 4\right]	\nonumber	\\
& \leq \mathbb{P}\left[R_{sum}(\mathbf{1}_{\mathcal{S}}, \boldsymbol{\alpha}, \boldsymbol{\gamma}) > r \right]	\nonumber	\\
  & \leq e^{-\left(\frac{r}{32} - 1\right)s}.
\end{align}
We can now cover $[-\pi, \pi)^{2K}$, the space of all possible transmit-receive directions, by the collection of $T$-sets corresponding to  $\mathcal{S}$ and all $(\boldsymbol{\alpha}, \boldsymbol{\gamma}) \in V(\mathcal{S})$. Here, $V(\mathcal{S})$ is a subset of $[-\pi, \pi)^{2K}$ defined as follows:
$(\boldsymbol{\alpha}, \boldsymbol{\gamma}) \in V(\mathcal{S})$ if $\alpha_k, \gamma_k = 0 \, \forall k \notin \mathcal{S}$ and $$\alpha_k, \gamma_k \in \left\{\frac{n}{s^2} : n \in \mathbb{Z},  \lvert n \rvert  < \pi s^2 + \frac{1}{2}\right\} \, \forall k \in \mathcal{S}.$$
$V(\mathcal{S})$ contains transmit-receive directions that are spread apart by $\frac{1}{s^2}$ across dimensions corresponding to the scheduled user set $\mathcal{S}$. Note that $[-\pi, \pi)^{2K} \subseteq \bigcup_{(\boldsymbol{\alpha}, \boldsymbol{\gamma}) \in V(\mathcal{S})} T(\mathcal{S}, \boldsymbol{\alpha}, \boldsymbol{\gamma})$
 and $ \lvert V(\mathcal{S}) \rvert  \leq (2\pi s^2 + 2)^s.$ Taking a union bound over $T$-sets corresponding to $\mathcal{S}$ and $(\boldsymbol{\alpha}, \boldsymbol{\gamma})\in V(\mathcal{S}),$
\begin{align}
\label{eqn:upper-bound-fixed-S}
\begin{split}
& \mathbb{P}\left[\max_{(\boldsymbol{\alpha}, \boldsymbol{\gamma}) \in [-\pi, \pi)^{2K}} R_{sum}(\mathbf{1}_{\mathcal{S}}, \boldsymbol{\alpha}, \boldsymbol{\gamma}) > r + 4\right]	\\
& \leq  \mathbb{P}\left[\bigcup_{(\boldsymbol{\alpha}, \boldsymbol{\gamma}) \in V(\mathcal{S})}\max_{(\boldsymbol{\alpha}', \boldsymbol{\gamma}') \in T(\mathcal{S}, \boldsymbol{\alpha}, \boldsymbol{\gamma})}R_{sum}(\mathbf{1}_{\mathcal{S}}, \boldsymbol{\alpha}', \boldsymbol{\gamma}') > r + 4\right] \\
& \leq \sum_{(\boldsymbol{\alpha}, \boldsymbol{\gamma}) \in V(\mathcal{S})} \mathbb{P}\left[\max_{(\boldsymbol{\alpha}', \boldsymbol{\gamma}') \in T(\mathcal{S}, \boldsymbol{\alpha}, \boldsymbol{\gamma})}R_{sum}(\mathbf{1}_{\mathcal{S}}, \boldsymbol{\alpha}', \boldsymbol{\gamma}') > r + 4\right] \\
& \leq \sum_{(\boldsymbol{\alpha}, \boldsymbol{\gamma}) \in V(\mathcal{S})}  e^{-\left(\frac{r}{32} - 1\right)s} \quad (\text{from } \eqref{eqn:T-set})	\\
& \leq (2\pi s^2 + 2)^s e^{-\left(\frac{r}{32} - 1\right)s}.
\end{split}
\end{align}
This inequality gives a probabilistic upper bound on the achievable sum-rate for a fixed set of scheduled users.
We can now use \eqref{eqn:upper-bound-fixed-S} to prove \eqref{eqn:schedule-upper-bound} by taking a union bound over all possible scheduling sets. Substituting $r = C \ln K$ in \eqref{eqn:upper-bound-fixed-S} with $C = 6 \times 32,$
\begin{align*}
& \mathbb{P}\left[\max_{\substack{\mathcal{S} \subset [K]\\ (\boldsymbol{\alpha}, \boldsymbol{\gamma}) \in [-\pi, \pi)^{2K}}} R_{sum}(\mathbf{1}_{\mathcal{S}}, \boldsymbol{\alpha}, \boldsymbol{\gamma}) > C \ln K + 4\right]	\\
& \leq \sum_{\mathcal{S} \subset [K]} \mathbb{P}\left[\max_{(\boldsymbol{\alpha}, \boldsymbol{\gamma}) \in [-\pi, \pi)^{2K}} R_{sum}(\mathbf{1}_{\mathcal{S}}, \boldsymbol{\alpha}, \boldsymbol{\gamma}) > C \ln K + 4\right]	\\
& \leq \sum_{s=1}^K \binom K s (2\pi s^2 + 2)^s e^{-\left(\frac{C \ln K}{32} - 1\right)s} \quad (\text{from  \eqref{eqn:upper-bound-fixed-S}})	\\
& \leq \sum_{s=1}^K K^s e^{s \ln(2\pi s^2 + 2) -s\left(\frac{C \ln K}{32} - 1\right)}	\\
& = \sum_{s=1}^K e^{s\left(\ln K + \ln(2\pi s^2 + 2) + 1 - \frac{C \ln K}{32}\right)}	\\
& \leq \sum_{s=1}^K e^{s\left(4\ln K - \frac{C \ln K}{32}\right)} \quad (\text{for $K$ large enough})	\\
& = \sum_{s=1}^K e^{-2s\ln K } \quad (\text{since } C = 6 \times 32)	\\
& \leq K e^{-2\ln K}	\\
& = \frac{1}{K},
\end{align*}
which proves \eqref{eqn:schedule-upper-bound} and thus Theorem~\ref{thm:upper-bound}.
\end{proof}

\end{document}